\documentclass{new_tlp}

\usepackage{amsmath}
\usepackage{amssymb}
\usepackage{bussproofs}
\usepackage{multicol}
\usepackage{hyperref}
\usepackage{microtype}
\usepackage{thmtools}
\usepackage{mathtools}
\newcommand{\coloncolonequals}{\Coloneqq}

%% ODER: format ==         = "\mathrel{==}"
%% ODER: format /=         = "\neq "
%
%
\makeatletter
\@ifundefined{lhs2tex.lhs2tex.sty.read}%
  {\@namedef{lhs2tex.lhs2tex.sty.read}{}%
   \newcommand\SkipToFmtEnd{}%
   \newcommand\EndFmtInput{}%
   \long\def\SkipToFmtEnd#1\EndFmtInput{}%
  }\SkipToFmtEnd

\newcommand\ReadOnlyOnce[1]{\@ifundefined{#1}{\@namedef{#1}{}}\SkipToFmtEnd}
\usepackage{amstext}
\usepackage{amssymb}
\usepackage{stmaryrd}
\DeclareFontFamily{OT1}{cmtex}{}
\DeclareFontShape{OT1}{cmtex}{m}{n}
  {<5><6><7><8>cmtex8
   <9>cmtex9
   <10><10.95><12><14.4><17.28><20.74><24.88>cmtex10}{}
\DeclareFontShape{OT1}{cmtex}{m}{it}
  {<-> ssub * cmtt/m/it}{}

\DeclareFontShape{OT1}{cmtt}{bx}{n}
  {<5><6><7><8>cmtt8
   <9>cmbtt9
   <10><10.95><12><14.4><17.28><20.74><24.88>cmbtt10}{}
\DeclareFontShape{OT1}{cmtex}{bx}{n}
  {<-> ssub * cmtt/bx/n}{}
	% NEU

\newlength{\lwidth}\setlength{\lwidth}{4.5cm}
\newlength{\cwidth}\setlength{\cwidth}{8mm} % 3mm

\newcommand{\Conid}[1]{\mathit{#1}}
\newcommand{\Varid}[1]{\mathit{#1}}
\newcommand{\anonymous}{\kern0.06em \vbox{\hrule\@width.5em}}

% suggested by Neil Mitchell

\renewcommand{\leq}{\leqslant}

\EndFmtInput
\makeatother

\newcommand{\chrnabla}{CHR${}^\nabla$}

\newcommand{\hide}[1]{}

\newtheorem{defn}{Definition}

\newtheorem{lemma}{Lemma}
\newtheorem{corollary}{Corollary}

\title{Constraint Handling Rules with Binders, Patterns and Generic Quantification}

\author[A. Serrano and J. Hage]{ALEJANDRO SERRANO and JURRIAAN HAGE \thanks{This work was supported by the Netherlands Organisation for Scientific Research (NWO) project on ``DOMain Specific Type Error Diagnosis (DOMSTED)'' (612.001.213).} \\
 Department of Information and Computing Sciences, Utrecht University \\
 \email{\{A.SerranoMena, J.Hage\}@uu.nl}}

\jdate{August 2017}
\pubyear{2017}
\pagerange{\pageref{firstpage}--\pageref{lastpage}}
\doi{-}

\begin{document}

\label{firstpage}

\maketitle

\begin{abstract}
Constraint Handling Rules provide descriptions for constraint solvers. However, they fall short when those constraints specify some binding structure, like higher-rank types in a constraint-based type inference algorithm. In this paper, the term syntax of constraints is replaced by $\lambda$-tree syntax, in which binding is explicit; and a new $\nabla$ generic quantifier is introduced, which is used to create new fresh constants.

\noindent This paper is under consideration for publication in TPLP.
\end{abstract}

\section{Introduction}
\label{sec:intro}

Constraint Handling Rules \cite{Frhwirth:2009:CHR:1618539} -- usually shortened to CHRs -- provide a language to describe constraint solvers. The great body of work related to CHRs, summarized in \cite{Fruhwirth199895,DBLP:journals/tplp/SneyersWSK10}, makes them a good choice to describe algorithms where constraint rewriting is involved.

Type inference for functional languages is one of the areas in which CHRs have been applied successfully. CHRs have been used to improve type error reporting \cite{ChameleonProcessing,wazny_phdthesis06,DBLP:conf/esop/SerranoH16}, describe and extend the type class machinery in Haskell \cite{Sulzmann:2007:UFD:1194875.1194877,localinstances} and generalize the shape of algebraic data types \cite{Sulzmann:2006:FEA:2100071.2100078}. However, CHRs are not enough to describe in a concise form\footnote{Since CHRs are Turing-complete \cite{Sneyers:2009:CPC:1462166.1462169}, in theory any algorithm can be described. This does not mean that the description is convenient, though.} the solving needed for some advanced type system features.

In particular, consider parametric polymorphism. In languages based on the Hindley-Damas-Milner typing discipline, such as Haskell and ML, we find a stratification of types. Expressions are assigned a simple type, whereas declarations are given a \emph{type scheme} which quantifies over a set of variables. For example, the identity function \ensuremath{\Varid{id}} types with the following type scheme: \ensuremath{\forall\!\;\Varid{a}.\,\Varid{a}\to \Varid{a}}.

At each point where the \ensuremath{\Varid{id}} function is used as an expression, we cannot directly use this type scheme, since expressions must be assigned a type. Thus, we are forced to remove the quantification by \emph{instantiating} \ensuremath{\Varid{a}}. In other words, every occurrence of \ensuremath{\Varid{id}} is assigned a type \ensuremath{\alpha\to \alpha} for a fresh variable \ensuremath{\alpha}. Solving is responsible for finding a type for that variable, or deciding to quantify over it.

One of the beauties of Hindley-Damas-Milner is that we can instantiate the type schemes in a binding block at once before we start solving. Instantiating variables during constraint gathering takes us quite far: the type system of Haskell as of GHC 7, including type classes, type families and GADTs has been described in this fashion \cite{outsidein}. However, when type application \cite{Eisenberg:2016:VTA:2958880.2958890} or higher-rank types \cite{DBLP:journals/jfp/JonesVWS07} enter the picture we need to \emph{delay} instantiation.

As a result, whereas before each use of \ensuremath{\Varid{id}} in the source code would lead to the type \ensuremath{\alpha\to \alpha} for a fresh variable \ensuremath{\alpha}, now we simply assign it a fresh type \ensuremath{\beta} and recall the instantiation relation by means of a constraint \ensuremath{\forall\!\;\Varid{a}.\,\Varid{a}\to \Varid{a}\leq \beta}. The ball of finding a good type assignment for \ensuremath{\beta} is now in the court of the constraint solver, which we describe using CHRs. A rule instantiating polymorphic types looks similar to:
\begin{tabbing}
\qquad\=\hspace{\lwidth}\=\hspace{\cwidth}\=\+\kill
${\mathit{forall}\;(\Conid{X},\Conid{T})\leq \Conid{S}\iff\Conid{T'}\mathrel{=}\Varid{inst}\;(\Conid{X},\Conid{T})\mid \Conid{T'}\mathrel{=}\Conid{S}}$
\end{tabbing}intutively, this rule states that every time it finds an \ensuremath{\Conid{R}\leq \Conid{S}} constraint where the left-hand side is polymorphic -- \ensuremath{\mathit{forall}\;(\Conid{X},\Conid{T})} -- it should replace it by a type equality \ensuremath{\Conid{T'}\mathrel{=}\Conid{S}} where \ensuremath{\Conid{T'}} is derived from \ensuremath{\Conid{R}} by instantiating \ensuremath{\Conid{X}}. There are some problems with this rule:
\begin{enumerate}
\item There is no guarantee that variables are correctly bound in polymorphic types. We need to rely on the invariant that we are allowed to use variable \ensuremath{\Conid{X}} only inside \ensuremath{\Conid{T}}.
\item How to adequately represent type variables within CHRs is not yet well understood \cite{Csorba_prosand}. The simplest solution, using a unique number per variable, does not work well with binding structures.
\item The \ensuremath{\Varid{inst}} function has to be defined \emph{externally} to the constraint solver. Generation of fresh variables needs some extra state not reflected directly in the rules. Furthermore, instantiation is tricky due to variable capture.
\end{enumerate}
In this paper we propose an extension of CHR, which we call \chrnabla, in which the binding structure of a term is explicit and represented by $\lambda$-abstractions: \ensuremath{\mathit{forall}\;(\lambda \Conid{X}.\,\Conid{T})}. This syntax for binders is called $\lambda$-tree syntax \cite{Miller2000}. The previous constraint transformation is now written as:
\begin{tabbing}
\qquad\=\hspace{\lwidth}\=\hspace{\cwidth}\=\+\kill
${(\mathrm{1})\;\qquad\;\mathit{forall}\;(\Conid{T})\leq \Conid{S}\iff\exists\!\;\Conid{V}.\,\Conid{T}\;\Conid{V}\leq \Conid{S}}$
\end{tabbing}In the body we create a new variable \ensuremath{\Conid{V}} by means of the \ensuremath{\exists\!} \,\! operator, and then we replace \ensuremath{\Conid{X}} with \ensuremath{\Conid{V}} in the body of \ensuremath{\Conid{T}}. The system ensures that no variable is incorrectly captured.

We can go a step further and consider the shape of the rule whenever a polymorphic type is present to the right of the \ensuremath{\leq } constraint. The usual definition of instantiation says that $\sigma \leq \forall a. \tau$ if and only if $\sigma \leq \tau[a \mapsto \rho]$ for any choice of $\rho$. One way to prove this fact is by introducing a \emph{nominal constant} $\hat{a}$ standing for $a$, that is, a constant distinct from any other term in the language, and proving $\sigma \leq \tau[a \mapsto \hat{a}]$. Nominal constraints are also referred to as rigid or Skolem variables (we use the three names interchangeably throughout the paper). Our extension to CHRs includes a $\nabla$ operator, inspired by the logic of Abella \cite{TIU20073,JFR4650}, which introduces a new nominal constant.
\begin{tabbing}
\qquad\=\hspace{\lwidth}\=\hspace{\cwidth}\=\+\kill
${(\mathrm{2})\;\qquad\;\Conid{T}\leq \mathit{forall}\;(\Conid{S})\iff\nabla\!\;\Conid{A}.\,\Conid{T}\leq \Conid{S}\;\Conid{A}}$
\end{tabbing}Note that in order to have a sound algorithm, rule (1) must always be applied before rule (2). User-definable rule priorities \cite{DeKoninck:2007:URP:1273920.1273924} add support for preferences in CHRs. Priorities, on the other hand, are orthogonal to our extensions to CHR, and thus we concern ourselves only with ``classical'' rules.

\paragraph{Contributions.} Specifically, our contributions in this paper are:
\begin{itemize}
\itemsep0em
\item Extending CHR matching from ground terms to $\lambda$-tree terms via $L_\lambda$-unification.
\item Incorporating the notion of nominal constants and the $\nabla$ operator to generate fresh instances of these constants.
\item Providing techniques to deal with confluence and termination in this new scenario.
\end{itemize}
As an example of the power of the framework, we showcase an extension to the Haskell language to provide simple higher-rank types.

The integration of $\lambda$-tree syntax with $\nabla$ was already present in the Abella theorem prover \cite{JFR4650}, the integration with CHRs is entirely novel.

%\subsection{Constraint-Based Type Inference}

\section{Preliminaries}

In this section we give a brief introduction to each framework involved in our work.

\subsection{Constraint Handling Rules}
\label{sec:chr}

The language of CHRs has three kinds of rules:
$$\begin{array}{cccccr}
    & & H^r & \; \iff \;     & G \;\; | \;\; B & \quad \textrm{simplification}\\
H^k & &     & \; \implies \; & G \;\; | \;\; B & \quad \textrm{propagation} \\
H^k & \backslash & H^r & \; \iff \;  & G \;\; | \;\;  B & \quad \textrm{simpagation} \\
\end{array}$$
In each case, $H^k$, $H^r$ and $B$ are sets of constraints, called the heads and the body respectively. We use $\top$ to represent an empty set of constraints (reminescent of ``true''). In order for a rule to be applied, some constraints from the current set must match the heads, and the guard $G$ must be satisfied. Rewriting depends on the kind of rule: with simplification rules the constraints $H^r$ are replaced by $B$, in propagation rules the constraints $B$ are added to the set but the constraints $H^k$ are kept. Simpagation rules are a generalization of both: $H^k$ constraints are kept and $H^r$ are removed. In fact, we can view any CHR as a simpagation rule where the heads might be empty.

CHRs are applied non-deterministically. For a given initial constraint set, many different sequences of applications of rules are usually possible. Confluence, that is, the fact that the outcome of the process does not depend on the order in which rules are applied, must be proven externally by the author of the CHRs.

Type classes in Haskell are a prime example of what can be described using the CHR language \cite{Sulzmann:2007:UFD:1194875.1194877}. A declaration of the following type classes and instances:
\begin{center}
\begin{tabular}{ll}
\ensuremath{\mathbf{class}\;\Conid{Eq}\;\Varid{a}\;\mathbf{where}\mathbin{...}} & \ensuremath{\mathbf{class}\;\Conid{Eq}\;\Varid{a}\Rightarrow \Conid{Ord}\;\Varid{a}\;\mathbf{where}\mathbin{...}} \\
\ensuremath{\mathbf{instance}\;\Conid{Eq}\;\Conid{Int}\;\mathbf{where}\mathbin{...}} & \ensuremath{\mathbf{instance}\;\Conid{Ord}\;\Conid{Int}\;\mathbf{where}\mathbin{...}} \\
\ensuremath{\mathbf{instance}\;\Conid{Eq}\;\Varid{a}\Rightarrow \Conid{Eq}\;\,\;\!\![\Varid{a}\mskip1.5mu]\;\mathbf{where}\mathinner{\ldotp\ldotp}} & \ensuremath{\mathbf{instance}\;\Conid{Ord}\;\Varid{a}\Rightarrow \Conid{Ord}\;\,\;\!\![\Varid{a}\mskip1.5mu]\;\mathbf{where}\mathbin{...}}
\end{tabular}
\end{center}
gives rise to the following set of rules:
\begin{center}
\begin{tabular}{ll}
& \ensuremath{\Conid{Ord}\;\Conid{A}\implies\Conid{Eq}\;\Conid{A}} \\
\ensuremath{\Conid{Eq}\;\Conid{Int}\iff\top} & \ensuremath{\Conid{Ord}\;\Conid{Int}\iff\top}  \\
\ensuremath{\Conid{Eq}\;\,\;\!\![\Conid{A}\mskip1.5mu]\iff\Conid{Eq}\;\Conid{A}} & \ensuremath{\Conid{Ord}\;\,\;\!\![\Conid{A}\mskip1.5mu]\iff\Conid{Ord}\;\Conid{A}} \\
\end{tabular}
\end{center}
Using these rules we have different ways to get from \ensuremath{\Conid{Ord}\;\,\;\!\![\Varid{a}\mskip1.5mu]} to \ensuremath{\Conid{Eq}\;\Varid{a}}, illustrating the non-deterministic nature of CHRs. On the one hand, \ensuremath{\Conid{Ord}\;\,\;\!\![\Varid{a}\mskip1.5mu]} can be simplified to \ensuremath{\Conid{Ord}\;\Varid{a}}, which then generates a constraint \ensuremath{\Conid{Eq}\;\Varid{a}}. On the other hand, the generation of \ensuremath{\Conid{Eq}\;\,\;\!\![\Varid{a}\mskip1.5mu]} might take place first, and only then will it be simplified to \ensuremath{\Conid{Eq}\;\Varid{a}}. The restrictions that the Haskell language imposes on type classes ensure that both paths are equivalent. When translated to CHRs, the resulting rules form a confluent set. 

\subsection{$\lambda$-Tree Syntax, $L_\lambda$-Unification and $\beta_0$-Reduction}
\label{sec:lambdatree}

Several approaches exist for the representation of binding inside a language in a both convenient and efficient way, including de Bruijn indices \cite{DEBRUIJN1972381}, locally nameless representation \cite{Chargueraud2012}, and extensions dealing with hygiene in macros and different namespaces. In this paper we use $\lambda$-tree syntax \cite{Miller2000}, which is closely related to higher-order abstract syntax \cite{Pfenning:1988:HAS:960116.54010}.

Consider $\lambda$-terms, which extend a base language of terms with a binding operator $\lambda$, variables $x$ and an application form $T_1 \; T_2$. The equality relation between $\lambda$-terms includes not only syntactic equality, but also the following three rules:
$$\begin{array}{rclrr}
\lambda x. \, B & = & \lambda y. \, B[x \mapsto y] & \text{if $y$ not free in $B$} & \quad (\alpha) \\
(\lambda x. \, B) \, E & = & \multicolumn{2}{l}{B[x \mapsto E] \; \text{if $E$ does not contain $x$}} & (\beta) \\
\lambda x. \, F \; x & = & F & & (\eta) \\ 
\end{array}$$

During solving, the CHR engine needs to check whether any subset of the active constraints matches a rule. In order to do so, it matches the constraints with the patterns appearing in the head of the rule. In the setting of normal CHRs using first-order terms this check is unification. But now we have $\lambda$-terms, leading to higher-order unification \cite{HUET197527}. Full higher-order unification has several drawbacks, including undecidability. Miller \cite{Miller2000} argues for a weaker matching procedure, but which guarantees decidability, finds most general unifiers and runs in linear time \cite{Qian1993}. This procedure is called $L_\lambda$-unification, or (higher-order) pattern unification.

The main restriction in pattern unification is that an application in which the head is a metavariable must be done to distinct bound variables. Using a variable repeatedly is not allowed, so matching with \ensuremath{\lambda \Varid{x}.\,\lambda \Varid{y}.\,\Conid{F}\;\Varid{x}\;\Varid{y}} is OK, but \ensuremath{\lambda \Varid{x}.\,\Conid{F}\;\Varid{x}\;\Varid{x}} is not. Using a term which is not a variable as an argument, as in \ensuremath{\lambda \Varid{x}.\,\Conid{F}\;(\Conid{G}\;\Varid{x})} is not allowed either. The only exception is the $\eta$-expansion of variables: the term \ensuremath{\lambda \Varid{x}.\,\Conid{F}\;(\lambda \Varid{z}.\,\Varid{x}\;\Varid{x})} is allowed, since it is equivalent to \ensuremath{\lambda \Varid{x}.\,\Conid{F}\;\Varid{x}}.

The theory of unification of patterns is actually a theory of equality of $\lambda$-terms with the above restrictions. In such a scenario, the full power of $\beta$-reduction is not needed, but just a restricted version for variables.
$$\begin{array}{rclr}
(\lambda x. \, B) \, y & = & B[x \mapsto y] & \quad (\textrm{rule } \beta_0) \\
\end{array}$$

The reader may be worried about patterns being overly restrictive for our purposes. \cite{Miller91alogic} argues that for practical purposes pattern unification is enough, citing developments in the Isabelle theorem prover and the $\lambda$Prolog logic system. Later developments, such as the Abella theorem prover, also make use of $L_\lambda$-unification. Thus, patterns seem to be a sweet spot to base a language on.

\subsection{The Generic Quantifier $\nabla$}
\label{sec:nablaintro}

As Miller and Tiu discuss in \cite{Miller:2005:PTG:1094622.1094628}, there are two ways to prove a universally quantified proposition $\forall x. F$. The first one is proving $F[x \mapsto T]$ for every closed term $T$. Usually the set of terms is inductively defined, in which case \emph{induction} can be used to reduce the number of cases. Another possibility is proving $F[x \mapsto c]$ for a completely new nominal constant $c$ which shall remain fresh during the whole proof.

These two notions are not completely interchangeable: $\forall x \, y. P(x,y) \implies \forall z. P(z,z)$ holds for the first approach, since you can instantiate $x$ and $y$ in the first universal with the same value $z$. But $\forall x \, y. P(x,y) \implies \forall z. P(z,z)$ does \emph{not} hold in general with the second reading: we instantiate the antecedent with two different constants, leading to $P(a, b)$. But now we cannot make $a$ equal to $b$ -- which we need to prove $\forall z. P(z,z)$ -- all we know about those constants is indeed that they are different!

In order to distinguish these different mechanics, \cite{Miller:2005:PTG:1094622.1094628} introduces a new \emph{generic quantifier $\nabla$} to account for the second nature. During proof search, $\nabla$ introduces a new scoped constant, different from any other such constant in the proof. The resulting logic was extended \cite{TIU20073} to account for some desirable properties of $\nabla$, such as $\nabla x. \, B$ being equivalent to $B$ whenever $x$ is not free in $B$. The resulting logic, $LG^\omega$, forms the basis of our work. Formally, the rules governing this new universal quantifier $\nabla$ are:
\begin{prooftree}
\AxiomC{$\Gamma, B[x \mapsto \mathtt{a}] \vdash C$}
\AxiomC{$a \not\in \mathsf{supp}(B)$}
\RightLabel{\sc $\nabla$l}
\BinaryInfC{$\Gamma, \nabla x. B \vdash C$} 
\end{prooftree}
\begin{prooftree}
\AxiomC{$\Gamma \vdash C[x \mapsto a]$}
\AxiomC{$a \not\in \mathsf{supp}(C)$}
\RightLabel{\sc $\nabla$r}
\BinaryInfC{$\Gamma \vdash \nabla x. C$} 
\end{prooftree}
\begin{prooftree}
\AxiomC{$\pi(B) \equiv \pi'(B')$}
\AxiomC{$\pi, \pi'$ permutations of constants}
\RightLabel{\sc id$\pi$}
\BinaryInfC{$\Gamma, B \vdash B'$} 
\end{prooftree}
In the rules, the \emph{support} of a formula $B$, $\mathsf{supp}(B)$, is defined as the set of scoped constants in $B$. The reader can see that left and right-introduction rules for $\nabla$ work in the same way: $\nabla$ is indeed a self-dual quantifier.

Having only these two rules is not enough to prove some of the theorems we would like to hold; in particular $\nabla x. B(x) \vdash \nabla y. B(y)$ is not true. The reason is that each $\nabla$ introduces its own fresh constant, which are guaranteed to be distinct. More of these ``non-examples'' can be found in Figure 4 of \cite{Miller:2005:PTG:1094622.1094628}. The solution is to allow a permutation of constants in both the antecedent and the consequent, as shown in rule {\sc id$\pi$}. In our case, assume that after the introduction of constants we have to prove $B(\mathtt{a}) \vdash B(\mathtt{b})$ for constants $\mathtt{a}$ and $\mathtt{b}$. By applying an identity permutation on the left and the permutation $[\mathtt{a} \mapsto \mathtt{b}, \mathtt{b} \mapsto \mathtt{a}]$ on the right, we obtain syntactically equal formulas.

The previous rules make $\nabla$ commute with the $\vee$, $\wedge$ and $\supset$ connectives. Swapping of constants respects provability: $\nabla x. \nabla y. B(x,y) \equiv \nabla y. \nabla x. B(x,y)$, as witnessed by one application of {\sc id$\pi$}. Finally, we have that $\forall x. B(x) \vdash \nabla x. B(x)$ and $\nabla x. B(x) \vdash \exists x. B(x)$.

The problem which led to the creation of $\nabla$ was to reason about an object logic in a different meta-logic \cite{TIU20073}. This idea fits our problem: we want to reason about constraints with universal quantification, our object logic, at a higher level, namely CHRs. As we shall see in \autoref{sec:nabla}, each appearance of a $\nabla$ in a rule leads to the introduction of a new scoped constant.

\section{Putting All Together: \chrnabla}
\label{sec:nabla}

In this section we introduce our extensions to the CHR machinery needed to cope with rules such as those in the introduction. A simple implementation of type inference for higher-rank types is presented as an example of its use.

\paragraph{Syntax.}

\begin{figure}[t]
\begin{center}
\begin{tabular}{lrclr}
Variables   & $\mathfrak{V}$ & $\ni$ & $X, Y, Q, R, T, V \dots$ \\
\vspace{0.1cm} Nominal constants & $\mathfrak{K}$ & $\ni$ & $\mathtt{a}, \mathtt{b}, \dots$ \\
Constraints / patterns & $\quad C$ & $\coloncolonequals$ & $c(T_1, \dots, T_n)$ & $\quad c \in \mathfrak{C}$ \\
Terms       & $\quad T$ & $\coloncolonequals$ & $X \;\; | \;\; \mathtt{a}$ \\
            &           & $|$                & $f(T_1, \dots, T_n)$ & $\quad f \in \mathfrak{F}$ \\
            &           & $|$                & $\lambda X. T$ \\
            &           & $|$                & $T_1 \; T_2$ & \\
\end{tabular}
\end{center}
\caption{Syntax of constraints and terms}
\label{fig:syntax}
\end{figure}

In the syntax of \chrnabla, we use four sets of objects. These sets must be disjoint, except for constraint constructors and term constructors, which may overlap.
\begin{itemize}
\itemsep0em
\item A set of \emph{constraint} constructors $\mathfrak{C}$ annotated with arity.
\item A set of \emph{term} constructors $\mathfrak{F}$ annotated with their arity. We denote both types of constructors by lowercase letters such as $c, f, g, \dots$
\item An infinite set of \emph{term variables} $\mathfrak{V}$, which we denote by uppercase letters $X, Y, \dots$
\item An infinite set of \emph{nominal constants} $\mathfrak{K}$, which we denote by teletype letters $\mathtt{a}, \mathtt{b}, \dots$
\end{itemize}
Using these sets, we build up both constraints and terms, as given in \autoref{fig:syntax}. Note that at the constraint level, abstraction and application are not permitted; this richer syntactic structure is only available to terms.

In \chrnabla{} the syntax of terms and patterns coincide; we use the second term to emphasize the role of a term being part of a rule. In the following we often use $C$ to refer to both single constraints or sets of them; the context is enough to distinguish the intented meaning.

One of our goals in this paper is to allow the more powerful $L_\lambda$-unification to be used instead of plain term unification. In order to do so, we need to restrict the shape of constraints which may appear, as described in \autoref{sec:lambdatree}.
\begin{defn}[Well-defined patterns and constraints]
We say that a pattern $P$ is \emph{well-defined} if and only if free variables appear only applied to distinct variables or $\eta$-equivalent versions of these.

\noindent We say that a constraint $C$ is \emph{well-defined} if and only if each of its terms is a well-defined pattern and does not contain any nominal constant.
\end{defn}

Finally we can describe the syntax of a rule in \chrnabla:
$$H^k \; \backslash \; H^r \iff  G \;\; | \;\; \nabla X_1 \dots X_n. \, \exists Y_1 \dots Y_m. \, B$$
where $H^k$, $H^r$ are sets of well-defined constraints, and all free variables in $B$ come from either the free variables in $H^k$ and $H^r$ or from $\{ X_1, \dots, X_n, Y_1, \dots, Y_m \}$. Both the set of universally quantified and of existentially quantified variables may be empty. Note that all variables in a set of constraints must come from either the initial set or introduced by explicit quantification. In contrast, ``classical'' CHRs quantify variables implicitly.

We define simplification and propagation rules as a restriction of the main kind of rule with empty $H^k$ and $H^r$, respectively, as shown in \autoref{sec:chr}.

\paragraph{Declarative semantics.} There are different ways to interpret a set of CHRs. In other words, we can attach different \emph{semantics} to them. The \emph{declarative} semantics \cite{Fruhwirth199895} maps each rule to a logic formula. First, let us consider the declarative semantics of a rule without any of our extensions, that is,
$H^k \; \backslash \; H^r \iff  G \;\; | \;\; \exists \bar{Y}. B$.
Let $\bar{Z}$ be the set of free variables in $H^k$ and $H^r$; all the variables in $B$ are elements of $\bar{Z} \cup \bar{Y}$. The declarative semantics of such rule is defined as:
$$\forall \bar{Z}. (H^k \wedge G) \supset (H^r \leftrightarrow \exists \bar{Y}. B)$$
The declarative semantics of a rule $H^k \; \backslash \; H^r \iff  G \;\; | \;\; \nabla \bar{X}. \exists \bar{Y}. B$ looks similar,
$$\forall \bar{Z}. (H^k \wedge G) \supset (H^r \leftrightarrow \nabla \bar{X}. \exists \bar{Y}. B)$$
However, notice that in a ``classical'' rule the quantified variables range over terms, whereas in \chrnabla{} they range over $\lambda$-trees. This means that abstraction and application are also allowed by the syntax, and that $\alpha$, $\beta_0$ and $\eta$ rules relate equivalent constraints.

\paragraph{Theoretical operational semantics.} The other common semantics for CHRs is the so-called \emph{theoretical operational semantics} $\omega_t$ \cite{Duck2004}, akin to a small-step operational semantics. In this case, each rule gives rise to a transition between execution states. Each of these execution states is of the form $\langle G, S, B, T, \mathcal{N} \rangle$ where $G$ is the set of \emph{goal} constraints; $S$ is the constraint \emph{store}, which saves constraints along with an identifier; $B$ is a set of \emph{built-in} constraints; $T$ is the \emph{propagation} history; and $\mathcal{N}$ is the set of nominal constants in use. Built-in constraints are those known internally to the CHR engine and for which the engine may perform reasoning; we assume that an entailment relation $\Vdash$ is given for such constraints.

The rules defining the operational semantics for \chrnabla{} are given in \autoref{fig:theo}. They are quite similar to the original $\omega_t$. 
\begin{itemize}
\item The Solve rule moves a built-in constraint $c$ from the goal set to the built-in set. In practice, this means that the underlying procedure for $c$ is invoked.
\item The Introduce step assigns new identifiers to yet-unsolved goals. Attaching such an identifier is necessary to prevent trivial non-termination arising from using the same rule over the same set of constraints repeatedly.
\item The Apply rule executes a rule with a matching set of constraints. This is where the differences with ``classical'' CHRs arise. First, unification is higher-order. Second, since we have two types of quantification in the body $C$, we must introduce new nominal constants and new variables. We need to keep track of which nominal constants have been introduced in order to deal with confluence (\autoref{sec:confluence}), so we introduce a set $\mathcal{N}$ in the execution state to hold that information. Finally, we might need to $\beta_0$-reduce some of the obtained constraints in order to put them in the right syntax for further transitions.
\end{itemize}
Note that we require that constraints $C'$ obtained after freshening and reduction to be in normal form. That is, they \emph{must} be headed by a constraint constructor; abstraction and application are not allowed at constraint level.

\begin{figure}
\begin{flushleft}
{\bf Solve.} For each built-in constraint $c$,
$$\langle \{ c \} \uplus G, S, B, T, \mathcal{N} \rangle \leadsto \langle G, S, c \wedge B, T, \mathcal{N} \rangle$$
{\bf Introduce.} For each constraint $c$, given a fresh identifier $i$,
$$\langle \{ c \} \uplus G, S, B, T, \mathcal{N} \rangle \leadsto \langle G, \{ c\#i \} \cup S, B, T, \mathcal{N} \rangle$$
{\bf Apply.} If the set of rules contains a rule named $r$ defined as 
$$H^k \; \backslash \; H^r \iff G \;\; | \;\; \nabla \bar{X}. \exists \bar{Y}. C$$
and there is a matching $L_\lambda$-substitution $\theta$ such that $H_1 \equiv \theta(H^k)$, $H_2 \equiv \theta(H^r)$, the set of built-ins $B$ implies $\theta(G)$, that is, $B \Vdash \theta(G)$ and $t = \langle \mathsf{id}(H_1), \mathsf{id}(H_2), r \rangle \not\in T$,
$$
\langle G, H_1 \uplus H_2 \uplus S, B, T, \mathcal{N} \rangle
\leadsto \langle C' \uplus G, H_1 \cup S, \theta \wedge B, T \cup \{ t \}, \mathcal{N} \cup \bar{N} \rangle
$$
where $C'$ results from replacing $\bar{X}$ by fresh nominal constants $\bar{N}$, $\bar{Y}$ by fresh variables, and by $\beta_0$-reducing as much as possible.
\end{flushleft}
\caption{Theoretical operational semantics of \chrnabla}
\label{fig:theo}
\end{figure}

\subsection{Simple Type Inference for Higher-rank Types}
\label{sec:hr}

Higher-rank types extend Hindley-Damas-Milner types by allowing polymorphic types as arguments to function types. The type \ensuremath{\forall\!\;\Varid{a}.\,(\Varid{a}\to \Varid{a})\to \Conid{Int}} is not higher-rank, since quantification is at the top of the type. In contrast, \ensuremath{(\forall\!\;\Varid{a}.\,\Varid{a}\to \Varid{a})\to \Conid{Int}} is a higher-rank type, since \ensuremath{\forall\!\;\Varid{a}.\,\Varid{a}\to \Varid{a}}, a quantified type, is the type of the argument of the function.

Inference in the presence of such types has led to different approaches \cite{DBLP:journals/jfp/JonesVWS07,DBLP:conf/icfp/DunfieldK13}. Our aim in this section is to present a simple algorithm that shows the feasibility of using \chrnabla{} for encoding such type systems. The focus in the presentation is simplicity; although we conjecture that a simple variation of the presented procedure is complete.

First of all, we need to describe the syntax of the terms and constraints we deal with. In this case, terms will correspond to shapes of types:
\begin{itemize}
\itemsep0em
\item Types headed by constructors are represented as \ensuremath{\Varid{con}\;(\Conid{C},\Conid{Args})}, where \ensuremath{\Conid{C}} is the name of the constructor and \ensuremath{\Conid{Args}} a list of arguments. For example, the type \ensuremath{\Conid{Maybe}\;\Conid{Int}} is represented as \ensuremath{\Varid{con}\;(\text{\tt \char34 Maybe\char34},\,\;\!\![\Varid{con}\;(\text{\tt \char34 Int\char34},\,\;\!\![\mskip1.5mu])\mskip1.5mu])}. For the sake of conciseness in the examples, we use special syntax for function types, \ensuremath{\Varid{fn}\;(\Conid{A},\Conid{B})}.
\item Polymorphic types are wrapped into a \ensuremath{\mathit{forall}} term constructor, and have as only argument an abstraction binding a variable. For example, \ensuremath{\forall\!\;\Varid{a}.\,\Varid{a}\to \Varid{a}} is represented as \ensuremath{\mathit{forall}\;(\lambda \Conid{A}.\,\Varid{fn}\;(\Conid{A},\Conid{A}))}.
\item Type variables are represented by CHR-level variables.
\end{itemize}
Constraints take types as arguments: they can either be \emph{instantiation} constraints $T_1 \leq T_2$, meaning that $T_1$ is more polymorphic than $T_2$, or \emph{equality} constraints $T_1 = T_2$. We assume that equality constraints are built in, and perform unification on their arguments.

\begin{figure}[t]
\begin{tabular}{p{0.25\textwidth}p{0.65\textwidth}}
\begin{prooftree}
\AxiomC{$x : \tau \in \Gamma$}
\UnaryInfC{$\Gamma \vdash x : \tau \leadsto \top$}
\end{prooftree}
&
\begin{prooftree}
\AxiomC{$\Gamma \vdash e_1 : \tau_1 \leadsto C_1$}
\AxiomC{$\Gamma \vdash e_2 : \tau_2 \leadsto C_2$}
\AxiomC{$\alpha$ and $\beta$ fresh}
\TrinaryInfC{$\Gamma \vdash e_1 \; e_2 : \beta \leadsto \tau_1 \leq \mathit{fn}(\alpha, \beta), \tau_2 \leq \alpha, C_1, C_2$}
\end{prooftree}
\end{tabular}
\begin{tabular}{p{0.45\textwidth}p{0.45\textwidth}}
\vspace{-0.3cm}
\begin{prooftree}
\AxiomC{$\Gamma, x : \tau_1 \vdash e : \tau_2 \leadsto C$}
\UnaryInfC{$\Gamma \vdash \lambda (x : \tau_1). e : \mathit{fn}(\tau_1, \tau_2) \leadsto C$} 
\end{prooftree}
&
\vspace{-0.3cm}
\begin{prooftree}
\AxiomC{$\Gamma, x : \alpha \vdash e : \tau_2 \leadsto C$}
\AxiomC{$\alpha$ fresh}
\BinaryInfC{$\Gamma \vdash \lambda x. e : \mathit{fn}(\alpha, \tau_2) \leadsto C$} 
\end{prooftree}
\end{tabular}
\caption{Constraint generation for $\lambda$-calculus}
\label{fig:gen}
\end{figure}

Constraint-based type inference \cite{Pottier-Remy/emlti,Hage:2009:SSC:1519534.1519622} is structured in multiple phases. The first phase is constraint \emph{generation}: traversing the abstract syntax tree of the expression we are interested in typing and obtaining the corresponding constraints. We focus on a typed $\lambda$-calculus, whose generation judgment $\Gamma \vdash e : \tau \leadsto C$ is given in \autoref{fig:gen}. This judgement takes as input an environment $\Gamma$ and an expression $e$ and produces a type $\tau$ and some constraints $C$. The only unusual feature of the presented $\lambda$-calculus is the existence of an annotated abstraction. Such a feature is needed to type functions with higher-rank arguments \cite{DBLP:journals/jfp/JonesVWS07}.

\begin{figure}[t]
$$
\begin{array}{rclcrcl}
T & \leq & T & \iff & \mathit{true} \\
\mathit{con}(C_1, \mathit{Args}_1) & \leq & T_2 & \iff & \mathit{con}(C_1, \mathit{Args}_1) & = & T_2 \\
\mathit{fn}(S_1, T_1) & \leq & T_2 & \iff &  \mathit{fn}(S_1, T_1) & = & T_2 \\
\mathit{forall}(Q) & \leq & T_2 & \iff & \exists V. \, Q \; V & \leq & T_2 \quad \text{if } T_2 \not\equiv \mathit{forall}(R) \\
T_1 & \leq & \mathit{forall}(Q) & \iff & \nabla V. T_1 & \leq & Q \; V \\
\end{array}
$$
\caption{Rules for solving type inference constraints}
\label{fig:solv}
\end{figure}

The second phase is constraint \emph{solving}: at this point \chrnabla{} enters the game. We implement the solving procedure as a set of rules, given in \autoref{fig:solv}, to be applied exhaustively to the generated constraints. The first rule implements reflexivity. The reader might wonder whether \ensuremath{\Conid{T}\leq \Conid{T}} is a well-defined constraint: it is so, since the duplicated variable appears as argument to a constructor \ensuremath{\leq }, not as argument to a free variable. The next two rules simplify instantiations in which the left-hand side is not polymorphic to an equality. Note that by choosing these rules we make function types invariant. Finally, polymorphic types are dealt with by generating fresh variables or nominal constants.

After solving, leftover constraints are interpreted. Some of them, like \ensuremath{\Conid{V}\mathrel{=}\Conid{T}} symbolize the type assignment found by the type checker, whereas other types of constraints are expected to be absent in the final set.

Let us check how the type engine proceeds with an expression such as \ensuremath{\Varid{id}\;\mathrm{3}}, for the common \ensuremath{\Varid{id}\mathbin{::}\forall\!\;\Varid{a}.\,\Varid{a}\to \Varid{a}} function. The generation derivation is:
\begin{prooftree}
\AxiomC{$\Gamma \vdash \mathit{id} : \mathit{forall}(\lambda A. \, \mathit{fn}(A,A)) \leadsto \top$}
\AxiomC{$\Gamma \vdash 3 : \mathit{con}(\texttt{"Int"}, []) \leadsto \top$}
\BinaryInfC{$
\Gamma \vdash \mathit{id} \; 3 : T \leadsto \mathit{forall}(\lambda A. \, \mathit{fn}(A,A)) \leq \mathit{fn}(S, T), \mathit{con}(\texttt{"Int"}, []) \leq S
$}
\end{prooftree}
The first constraint is simplified by instantiating a fresh variable \ensuremath{\Conid{R}} to \ensuremath{\Varid{fn}\;(\Conid{R},\Conid{R})\leq \Varid{fn}\;(\Conid{S},\Conid{T})}. Then the instantiation constraint is turned into an equality, efectively unifying all of \ensuremath{\Conid{R}}, \ensuremath{\Conid{S}} and \ensuremath{\Conid{T}}. The second constraint is now \ensuremath{\Varid{con}\;(\text{\tt \char34 Int\char34},\,\;\!\![\mskip1.5mu])\leq \Conid{R}}, which is also simplified to an unification \ensuremath{\Varid{con}\;(\text{\tt \char34 Int\char34},\,\;\!\![\mskip1.5mu])\mathrel{=}\Conid{R}}. At this point, no instantiation constraint is left and the value of every type variable has been assigned by unification.

\subsection{Confluence and Termination Properties}
\label{sec:confluence}

CHRs are non-deterministic, as discussed in \autoref{sec:chr}. In contrast with logic languages such as Prolog, CHRs feature committed choice, that is, once a rule has been appplied there is no built-in backtracking mechanism. Thus, an important question when faced with a set of CHRs is whether they are \emph{confluent}, which means that the same final state is achieved regardless of the order in which rules are applied. The other important question is whether a set of CHRs \emph{terminates} for any given input.

In theory, CHRs could be treated as an instance of an abstract rewriting system. However, the fact that matching involves more than one constraint at once sets them apart from other rewriting systems: both confluence \cite{Abdennadher1997,Duck2007} and termination \cite{Voets2008,Pilozzi2008} require specific techniques. In this section we study the applicability of those techniques to our setting.

\paragraph{Confluence.} In order to define confluence we first need to define when two execution states are thought of as equivalent. We build on the definition given by Duck {\it et al.} \cite{Duck2007}, which we extend to account for nominal constants.
\begin{defn}[Variants]
Let $\sigma_1 = \langle G_1, S_1, B_1, T_1, \mathcal{N}_1 \rangle$ and $\sigma_2 = \langle G_2, S_2, B_2, T_2, \mathcal{N}_2 \rangle$ be two execution states. We assume $\mathcal{N}_1$ and $\mathcal{N}_2$ have the same number of variables, otherwise we can just extend the shorter with new fresh ones. For each execution state $\sigma_i$, let $T'_i$ be the set of tokens from $T_i$ which mention any of the constraints in $S_i$ and let $\mathcal{V}_i$ be the set of variables appearing in $G_i$, $S_i$, $B_i$ or $T_i$. We say that $\sigma_1$ and $\sigma_2$ are \emph{variants}, $\sigma_1 \approx \sigma_2$, if either:
\begin{itemize}
\itemsep0em
\item There exists a unifier $\rho$ of $S_1, S_2, G_1, G_2, T'_1$ and $T'_2$, and a permutation $\pi$ between $\mathcal{N}_2$ and $\mathcal{N}_1$ -- that is, a bijective mapping between the two sets of nominal constants -- such that
$\exists \mathcal{V}_1. \, B_1 \supset \exists \mathcal{V}_1. \, \rho \wedge \pi(B_2)$
 and $\exists \mathcal{V}_2. \, B_2 \supset \exists \mathcal{V}_2. \, \rho \wedge \pi^{-1}(B_1)$.
In other words, there exists a unifier modulo renaming.
\item Or both $B_1$ and $B_2$ are logically inconsistent.
\end{itemize}
\end{defn}
\noindent The idea of being equal modulo permutation comes from the usage of the {\sc id$\pi$} rule introduced in \autoref{sec:nablaintro}. In a logical sense, two execution states which are variants satisfy:
$$
\nabla \mathcal{N}_1. \, \exists \mathcal{V}_1. \, B_1 \supset \nabla \mathcal{N}_1. \, \exists \mathcal{V}_2. \, \rho \wedge B_2
\quad \text{and} \quad
\nabla \mathcal{N}_2. \, \exists \mathcal{V}_2. \, B_2 \supset \nabla \mathcal{N}_2. \, \exists \mathcal{V}_1. \, \rho \wedge B_1
$$
However, the reading using permutations is more operational.

Once the notion of variance is settled, we can formally define the confluence property. In the following, $\leadsto$ refers to the theoretical operational semantics relation defined in \autoref{fig:theo}, and $\leadsto^*$ to its transitive and reflexive closure.

\begin{defn}[Joinable states, local confluence, confluence]
Given two execution states $\sigma_1$ and $\sigma_2$, we say that they are \emph{joinable}, $\sigma_1 \downarrow \sigma_2$, if there exists $\sigma'_1$ and $\sigma'_2$ such that $\sigma_1 \leadsto^* \sigma'_1$, $\sigma_2 \leadsto^* \sigma'_2$ and the end states are variants, $\sigma'_1 \approx \sigma'_2$.

\vspace{0.1cm}
\noindent A set of CHRs is said to be \emph{locally confluent} if for any states $\sigma_0, \sigma_1$ and $\sigma_2$ such that $\sigma_0 \leadsto \sigma_1$ and $\sigma_0 \leadsto \sigma_2$, we have that $\sigma_1$ and $\sigma_2$ are joinable.

\vspace{0.1cm}
\noindent A set of CHRs is said to be \emph{confluent} if for any states $\sigma_0, \sigma_1$ and $\sigma_2$ such that $\sigma_0 \leadsto^* \sigma_1$ and $\sigma_0 \leadsto^* \sigma_2$, we have that $\sigma_1$ and $\sigma_2$ are joinable.
\end{defn}
\noindent Note the difference between the two notions of confluence. In local confluence, we take \emph{one} step and then try to join the new states, whereas in normal confluence we are allowed to take any number of steps in the hypothesis. % Both notions are tied together by the following lemma.
\begin{lemma}[Newman 1942]
If a terminating abstract rewrite system is locally confluent, then it is confluent. 
\end{lemma}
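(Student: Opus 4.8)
The plan is to prove the statement by well-founded (Noetherian) induction on the reduction relation $\leadsto$. Termination guarantees precisely that there is no infinite forward chain $\sigma_0 \leadsto \sigma_1 \leadsto \cdots$, so $\leadsto$ is well-founded and we may assume, as induction hypothesis, that confluence already holds at every state reachable from $\sigma_0$ in one or more steps. The goal at $\sigma_0$ is: whenever $\sigma_0 \leadsto^* \sigma_1$ and $\sigma_0 \leadsto^* \sigma_2$, the states $\sigma_1$ and $\sigma_2$ are joinable.

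First I would dispose of the degenerate cases. If either reduction is empty, say $\sigma_0 = \sigma_1$, then $\sigma_1 = \sigma_0 \leadsto^* \sigma_2$ together with the reflexive $\sigma_2 \leadsto^* \sigma_2$ already witnesses joinability (the two end states are literally equal, hence variants), and symmetrically for $\sigma_0 = \sigma_2$. So assume both reductions take at least one step and factor them as $\sigma_0 \leadsto \tau_1 \leadsto^* \sigma_1$ and $\sigma_0 \leadsto \tau_2 \leadsto^* \sigma_2$.

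Now comes the diagram chase, using local confluence exactly once and the induction hypothesis twice. Applying local confluence to the two single steps $\sigma_0 \leadsto \tau_1$ and $\sigma_0 \leadsto \tau_2$ yields a common reduct $\tau$ with $\tau_1 \leadsto^* \tau$ and $\tau_2 \leadsto^* \tau$ (up to the variant identification discussed below). Since $\sigma_0 \leadsto \tau_1$, the induction hypothesis applies at $\tau_1$: from $\tau_1 \leadsto^* \sigma_1$ and $\tau_1 \leadsto^* \tau$ we obtain a state $\mu$ with $\sigma_1 \leadsto^* \mu$ and $\tau \leadsto^* \mu$. Likewise $\sigma_0 \leadsto \tau_2$ lets us invoke the hypothesis at $\tau_2$: from $\tau_2 \leadsto^* \sigma_2$ and $\tau_2 \leadsto^* \tau \leadsto^* \mu$ we obtain a state $\nu$ with $\sigma_2 \leadsto^* \nu$ and $\mu \leadsto^* \nu$. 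Chaining, $\sigma_1 \leadsto^* \mu \leadsto^* \nu$ and $\sigma_2 \leadsto^* \nu$, so $\sigma_1 \downarrow \sigma_2$, completing the induction.

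The one place where care is needed in our setting is that joinability means reduction to \emph{variants} rather than to a syntactically identical state. Consequently every step above that silently composes two joins must be justified by a small compatibility lemma stating that $\approx$ can be transported along reduction: if $\chi \approx \chi'$ and $\chi \leadsto^* \omega$, then $\chi' \leadsto^* \omega'$ for some $\omega' \approx \omega$. This in turn rests on the operational semantics respecting permutations of nominal constants, which is exactly the content built into rule {\sc id$\pi$} and the Variants definition. I expect this bookkeeping -- threading the permutation $\pi$ and the unifier $\rho$ through each appeal to the hypothesis -- to be the only real obstacle; the combinatorial skeleton is the classical Newman diagram and goes through unchanged.
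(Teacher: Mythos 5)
Your proof is correct, but note that the paper does not prove this lemma at all: it is stated with the attribution ``Newman 1942'' and used as a black box (the appendix proof is for the different statement that joinable critical pairs imply \emph{local} confluence). What you have written is the standard modern proof of Newman's Lemma by Noetherian induction on $\leadsto$ (due to Huet rather than to Newman's original, more involved, argument), and the diagram chase --- one appeal to local confluence at the peak, two appeals to the induction hypothesis at $\tau_1$ and $\tau_2$ --- is exactly right for an abstract rewrite system where joinability means reduction to a literally common reduct. Your final paragraph identifies the one genuine subtlety in transplanting this to the paper's setting, and it deserves emphasis: once joinability is defined modulo the variant relation $\approx$, the naive Newman argument is \emph{not} automatically valid --- rewriting modulo an equivalence is known (again from Huet's work) to require a coherence property beyond termination plus local confluence modulo, and counterexamples exist without it. The compatibility lemma you state, that $\chi \approx \chi'$ and $\chi \leadsto^* \omega$ imply $\chi' \leadsto^* \omega' \approx \omega$, is precisely that coherence property; in this setting it is plausible because the operational semantics of \autoref{fig:theo} is stable under renaming of variables and permutation of nominal constants (the content of rule {\sc id$\pi$} and the Variants definition), but it is a real proof obligation, not mere bookkeeping, and neither you nor the paper discharges it. Since the paper inherits its variance framework from Duck \emph{et al.}, where the analogous property is established, citing that would close the gap; as a self-contained proof of the lemma as stated (for an abstract rewrite system), your argument is complete as is.
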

The classical way to prove confluence of CHRs is to determine that your rules are locally confluent and then use Newman's Lemma. Proving local confluence directly is still hard, though. Most works \cite{Abdennadher1997,Duck2007} provide sufficient conditions for a set of CHRs to be locally confluent via the notion of a \emph{critical pair}. Intuitively, a critical pair is a minimal description of a point where confluence is at risk, for example, because more than one rule may apply. As previously, we follow \cite{Duck2007}.
\begin{defn}[Critical pair]
Given two rule instances $H^k_1 \mathrel{\backslash} H^r_1 \iff g_1 \mathrel{|} \nabla \overline{X}_1. \, \exists \overline{Y}_1. \, B_1$ and $H^k_2 \mathrel{\backslash} H^r_2 \iff g_2 \mathrel{|} \nabla \overline{X}_2. \, \exists \overline{Y}_2. \, B_2$, we define the following multisets of constraints, using $\uplus$ for disjoint unions:
$$H^k_1 \uplus H^r_1 = H^\cap_1 \uplus H^\Delta_1 \quad \text{and} \quad H^k_2 \uplus H^r_2 = H^\cap_2 \uplus H^\Delta_2$$
Intuitively, all the constraints involved in a rule, $H^k_i \uplus H^r_i$, are divided into two sets, $H^\cap_i$ and $H^\cap_i$. The constraints in $H^\cap_1$ and $H^\cap_2$ are those where the rules may overlap -- if they do not, the equality in the definition of the critical pair becomes false. $H^\Delta_1$ and $H^\Delta_2$, on the other hand, represent those constraints where rules do not coincide. By ranging over all possible partitions of $H^\cap_i$ and $H^\Delta_i$, we consider all possible scenarios in which both rules could apply.

We define the propagation history $T_{CP}$ to include a token $e_i = \langle \mathsf{id}(H^k_i), \mathsf{id}(H^r_i), r_i \rangle$ for each propagation rule. A \emph{critical pair} for these two rules is the pair of execution states:
$$\begin{array}{l}
\langle B_1, (H^\Delta_1 \uplus H^\Delta_2 \uplus H^\cap_1) - H^r_1, H^\cap_1 = H^\cap_2 \wedge g_1 \wedge g_2, T_{CP} - \{ e_1 \}, \mathcal{N}_1 \cup \mathcal{N}_2 \cup \overline{X}_1 \rangle \\
\langle B_2, (H^\Delta_1 \uplus H^\Delta_2 \uplus H^\cap_1) - H^r_2, H^\cap_1 = H^\cap_2 \wedge g_1 \wedge g_2, T_{CP} - \{ e_2 \}, \mathcal{N}_1 \cup \mathcal{N}_2 \cup \overline{X}_2 \rangle
\end{array}$$
where $\mathcal{N}_i$ is the set of nominal constants in each rule instance. 
\end{defn}
\noindent A critical pair encodes the result of applying the two rules starting from an initial execution state to which both rules may be applied.
\begin{restatable}[Joinable critical pairs $\implies$ locally confluent]{theorem}{thmmain}
Given a set $\mathcal{C}$ of CHRs, if all critical pairs are joinable, then $\mathcal{C}$ is locally confluent.
\end{restatable}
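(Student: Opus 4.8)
The plan is to follow the standard critical-pair methodology of Abdennadher and of Duck et al.\ \cite{Duck2007}, adapting each step to the presence of nominal constants, $\beta_0$-reduction, and the permutation-based notion of variance. Local confluence requires that whenever $\sigma_0 \leadsto \sigma_1$ and $\sigma_0 \leadsto \sigma_2$ by two single transitions, the results $\sigma_1$ and $\sigma_2$ be joinable. So the proof is a case analysis on the pair of transitions from \autoref{fig:theo} that fire out of $\sigma_0$. The diagonal case, where the two transitions are literally identical, is immediate, since then $\sigma_1 = \sigma_2$ and each state is a variant of itself reachable in zero steps.

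First I would dispatch the cases that do not involve two genuinely interacting rule applications. If both transitions are Solve or Introduce, or one is an Apply and the other a Solve or Introduce, they act on essentially independent parts of the state: moving a built-in or assigning an identifier to a goal constraint does not touch the store constraints matched by an Apply, and conversely Apply leaves the relevant goal constraint and the matched identifiers untouched. Hence these transitions commute, and performing the untaken step on each side lands both in the same state, up to the fresh-identifier and fresh-constant choices, which are reconciled by the unifier $\rho$ and permutation $\pi$ of the variance relation. The same holds when both transitions are Apply but their matched head constraints are disjoint in the store. All of these arguments rest on a small \emph{monotonicity} observation: a transition enabled in a state remains enabled after extra constraints, built-ins, history tokens, or nominal constants are added, and produces the correspondingly extended result.

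The real work is the case where both transitions are Apply and their matched heads \emph{overlap} in the store, sharing a non-empty multiset of constraints. This is exactly the situation abstracted by the definition of critical pair. Here I would argue that $\sigma_0$ decomposes into the critical-pair seed, namely the shared heads $H^\cap$, the conjoined guards, the minimal history $T_{CP}$, and the rules' own nominal constants, together with a uniform \emph{context} $\Delta$ consisting of the remaining store, built-ins, goals, history, and ambient nominal constants. The two single steps $\sigma_0 \leadsto \sigma_i$ then coincide, modulo $\Delta$, with the two components of the corresponding critical pair. By hypothesis that critical pair is joinable, so there exist derivations to variant states; lifting these derivations back through $\Delta$ by monotonicity yields $\sigma_1 \leadsto^* \sigma_1'$ and $\sigma_2 \leadsto^* \sigma_2'$, and since variance is preserved under adding the same context one concludes $\sigma_1' \approx \sigma_2'$, that is, $\sigma_1 \downarrow \sigma_2$. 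Ranging over all partitions $H^\cap_i \uplus H^\Delta_i$ covers every possible overlap; the partitions whose shared parts fail to unify merely make the critical pair's built-in store inconsistent, which is the (vacuously joinable) second clause of the variance definition.

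The hard part will be making the lifting step rigorous in this new setting rather than in the routine first-order one. Three points demand care. First, the fresh nominal constants $\bar{N}$ introduced by each $\nabla$ along the lifted derivations must be chosen apart from $\Delta$ and from each other across the two branches; reconciling these two independent freshness choices is precisely what the permutation $\pi$ in the variance relation buys us, so I must check that a single $\pi$ can be assembled from the critical-pair permutation composed with the identity on the shared context constants. Second, the $\beta_0$-reduction and the $L_\lambda$ normal-form requirement on the produced constraints $C'$ must be shown stable under the context extension, so that the same reduced, constructor-headed constraints appear in the lifted states. Third, the propagation history must not silently disable a lifted transition; this is exactly why the critical pair is defined with $T_{CP}$ minus the firing token, and I would verify that re-adding the context tokens never reintroduces a token that blocks a step used in joining. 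Once these three invariants are established, the monotonicity lemma makes the remaining bookkeeping mechanical.
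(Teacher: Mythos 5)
Your proposal follows essentially the same route as the paper's proof: a case analysis on pairs of $\omega_t$ transitions, with the Solve/Introduce combinations commuting trivially, Solve/Introduce against Apply commuting because they act on $G$ versus $S$, and Apply against Apply split into disjoint peaks (joined by applying each rule on the other side and permuting the independently chosen fresh nominal constants, $\mathcal{N}_i \leftrightarrow \mathcal{M}_i$) and critical peaks (discharged by the joinability hypothesis on critical pairs). If anything, you are more explicit than the paper, which leaves implicit the monotonicity lemma, the lifting of the critical-pair joining derivations through the ambient context, the stability of $\beta_0$-normal forms, and the check that history tokens do not block a lifted step.
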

\begin{proof}
See Appendix.
\end{proof}
\begin{corollary}[Joinable critical pairs + terminating $\implies$ confluent]
Given a terminating set $\mathcal{C}$ of CHRs, if all critical pairs are joinable, then $\mathcal{C}$ is confluent.
\end{corollary}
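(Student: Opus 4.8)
The plan is to obtain confluence as an immediate consequence of the two results already in hand, chaining them through the termination hypothesis. The argument has exactly two steps and introduces no genuinely new content; the only care needed is in matching the abstract statement of Newman's Lemma to the concrete rewrite system $\leadsto$ defined in \autoref{fig:theo}.

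First I would apply the Theorem (joinable critical pairs $\implies$ locally confluent): since by hypothesis every critical pair of $\mathcal{C}$ is joinable, $\mathcal{C}$ is locally confluent. Second I would instantiate Newman's Lemma with the abstract rewrite system whose objects are execution states and whose reduction relation is $\leadsto$. Its two hypotheses are met---$\mathcal{C}$ is terminating by assumption and locally confluent by the first step---so the lemma yields that $\mathcal{C}$ is confluent, which is precisely the desired conclusion.

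The one point that deserves scrutiny, and which I expect to be the main (albeit mild) obstacle, is that Newman's Lemma is classically stated for joinability meaning reduction to a \emph{common} object, whereas here both local confluence and confluence are phrased via the variant relation $\approx$ (equality modulo variable renaming and permutation of nominal constants, as in the definition of variants). To apply Newman's Lemma cleanly I would pass to the quotient of execution states by $\approx$, which requires checking that $\approx$ is compatible with $\leadsto$---that is, that variant states have variant successors under any rule application---so that $\leadsto$ descends to a well-defined reduction relation on $\approx$-equivalence classes. Granting this compatibility, which follows from the {\sc id$\pi$}-style invariance of the semantics under renaming and permutation of constants, joinability modulo $\approx$ coincides with ordinary joinability in the quotient, and the standard Newman argument then applies verbatim to give confluence.
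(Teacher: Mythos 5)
Your proposal is correct and matches the paper's intended argument exactly: the corollary is an immediate consequence of the theorem together with Newman's Lemma, which the paper states for precisely this purpose (``the classical way to prove confluence of CHRs is to determine that your rules are locally confluent and then use Newman's Lemma''). Your additional remark about quotienting execution states by the variant relation $\approx$ so that Newman's Lemma applies verbatim is a sensible refinement of the same route, not a different one, and correctly identifies the only bookkeeping the paper glosses over.
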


As an example of how to apply this technique, let us look at a critical pair for the CHRs in \autoref{fig:solv}. In particular, when instantiation involves a type of the form \ensuremath{\mathit{forall}\;(\lambda X_1.\,\mathit{forall}\;(\lambda X_2.\,\mathbin{...}\mathit{forall}\;(\lambda X_n.\,\Conid{Q})))} both the first rule, which encodes reflexivity, and the last rule, which tells us what to do when the right-hand side is a quantified type, are applicable. In the first case the resulting state is $\top$, and in the second case,
\begin{tabbing}
\qquad\=\hspace{\lwidth}\=\hspace{\cwidth}\=\+\kill
${\mathit{forall}\;(\lambda X_1.\,\mathit{forall}\;(\lambda X_2.\,\mathbin{...}\mathit{forall}\;(\lambda X_n.\,\Conid{Q})))\leq \mathit{forall}\;(\lambda X_2.\,\mathbin{...}\mathit{forall}\;(\lambda X_n.\,\Conid{Q}\;\mathtt{a}_1))}$
\end{tabbing}We need to check that these two states are joinable. The first resulting state does not allow more rules to be applied, but for the second we can keep going until we end up with a constraint of the form
\ensuremath{\Conid{Q}\;\alpha_1\;\alpha_2\mathbin{...}\alpha_n\mathrel{=}\Conid{Q}\;\mathtt{a}_1\;\mathtt{a}_2\mathbin{...}\mathtt{a}_n}
for fresh existential variables $\alpha_i$ and fresh constants $\mathsf{a}_i$. Reading the definition of variants, we need to prove that
$$
\top \supset \exists \overline{\alpha}_i. \, Q \; \alpha_1 \; \dots \; \alpha_n = Q \; \mathsf{a}_1 \; \dots \; \mathsf{a}_n
\quad \text{and} \quad
\exists \overline{\alpha}_i. \, Q \; \alpha_1 \; \dots \; \alpha_n = Q \; \mathsf{a}_1 \; \dots \; \mathsf{a}_n \supset \top
$$
The second one is trivial. For the first one, just take $\alpha_i$ equal to $\mathsf{a}_i$ and we are done. Since we have not introduced any constant, we can take any permutation of the $\mathsf{a}_i$'s to make the execution states variants.

\paragraph{Termination.} Most approaches to termination of CHRs are based on a quantity which decreases after each step of solving. A norm is defined by \cite{Fruhwirth2000,Voets2008,Pilozzi2008} as a function $\| \cdot \| : \ensuremath{\Conid{Term}} \to \mathbb{N}$ and a level mapping as a function $| \cdot | : \ensuremath{\Conid{Constraint}} \to \mathbb{N}$. If some conditions over the rules and initial constraints referring to those mappings are satisfied, then the CHRs are terminating for a given initial set of constraints.

A key property in most realizations of these conditions is that of a \emph{rigid} level mapping. A rigid $| \cdot |$ is invariant under substitution: for all constraints $C$ and substitutions $\theta$, $| C | = | \theta C |$. We can redefine this notion in such a way that the techniques of \cite{Fruhwirth2000,Voets2008,Pilozzi2008} keep working:
\begin{itemize}
\itemsep0em
\item Level mappings assume that constraints are $\beta_0$-reduced before application.
\item Substitutions $\theta$ map variables to possibly higher-order terms.
\item The mapping must be independent of nominal constraints. Formally, for every permutation of variables $\pi$, $| C | = | \pi(C) |$. Otherwise a rule like $c(X) \iff \nabla Y. \, c(Y)$ may generate an infinite number of new constants.
\end{itemize}

Using these ingredients, \cite{Voets2008} shows that the following ranking condition is enough to guarantee termination. \cite{Fruhwirth2000} and \cite{Pilozzi2008} describe different conditions in the same spirit.
\begin{defn}[Call set]
Given a set of CHRs $R$ and an initial set of constraints $I$, the \emph{call set} $\mathsf{Call}(R,I)$ is the union, over all possible traces of computations of $R$ on $I$, of all constraints added to the store at every rule application.
\end{defn}
\begin{defn}[Ranking condition for CHRs]
A set of CHRs $R$ and an initial set of constraints $I$ is said to satisfy the \emph{ranking condition} with respect to a level mapping $| \cdot |$ if every constraint in $\mathsf{Call}(R,I)$ is rigid with respect to $| \cdot |$, and for each rule $R$, matching substitution $\theta$ and answer substitution $\varphi$:
\begin{itemize}
\item If $R$ is a propagation rule $H_1, \dots, H_n \implies G \;\; | \;\; B_1, \dots, B_m$, for every $i = 1, \dots, n$ and $j = 1, \dots, m$ such that $B_j$ is not built-in, then $|\theta H_i | > |\varphi \theta B_j|$
\item If $R$ is a simpagation rule $H \;\; \backslash \;\; H_1, \dots, H_n \implies G \;\; | \;\; B_1, \dots, B_m$, define
$$p = \max \{ |\theta H_1|, \dots, |\theta H_n|, |\varphi \theta B_1|, \dots, |\varphi \theta B_m|  \;\; | \;\; B_j \text{ not built-in} \}$$
Then the number of constraints with level value $p$ in $\{ \theta H_1, \dots, \theta H_n \}$ is higher than the the number of those constraints in $\{ \varphi \theta B_1, \dots, \varphi \theta B_m \}$.
\end{itemize}
\end{defn}

As an example, in the rules for type inference given in \autoref{fig:solv}, the only problematic rule with respect to nominal constants is the last one, since the others do not introduce any constants. We can use a norm which puts equality constraints before instantiation constraints, and for the latter type, adds up the number of \ensuremath{\mathit{forall}} constituents from both sides. The reader can check that this norm always decreases during solving.

\section{Implementation}
\label{sec:impl}

We have built a solver for \chrnabla{} as a deeply embedded language in Haskell; it is available at \url{https://git.science.uu.nl/f100183/uchrp}. Apart from the features described in this paper, it also provides a type-safe interface to CHRs and support for rule priorities as described by \cite{DeKoninck:2007:URP:1273920.1273924}. The only difference between the library and this paper is that the former uses a built-in predicate to introduce variables, instead of making the operator $\nabla$ part of the syntax of rules.

Using this library, we have implemented the type checker featured in this paper to show the feasibility of the approach; it is available at \url{https://git.science.uu.nl/f100183/quique}. In fact, this type checker also implements impredicative polymorphism, although the description of the corresponding set of CHRs is out of scope for this paper. This suggests that \chrnabla{} is expressive enough for encoding complex type systems.

Our preliminary evaluation shows that \chrnabla{} is competitive in terms of performance with similar approaches. Furthermore, there is room for optimizations such as performing on-the-fly substitution and indexing constraints in the solver.

\section{Related Work}

\paragraph{$\lambda$-tree syntax and $\nabla$ quantification.} The closest system to ours is $\lambda$Prolog \cite{Miller91alogic,Miller:2012:PHL:2331097}. We have drawn inspiration from several of its features, like the use of $\lambda$-tree syntax and pattern unification in heads. The main difference is that $\lambda$Prolog, as its name suggests, builds upon the logic programming paradigm, so it includes search with backtracking. On the other hand, \chrnabla{} embodies committed choice: no backtracking is done once a rule is chosen to apply. As a result, the kind of formulas we can represent in \chrnabla{} is more limited than in $\lambda$Prolog: the latter allows hereditary Harrop formulas, where both universal quantification and implication may appear in the left-hand side of an implication, whereas we do not allow left-nested implications.

The Abella theorem prover \cite{JFR4650} was developed to reason about $\lambda$Prolog and inherits many of its features. Abella uses the $\nabla$ quantifier to reason about the object logic into its meta-logic. Our approach is similar, but we have blurred the lines between object and meta-logic: in \chrnabla{} only universal quantification in using nominal constants is available. This results in a more uniform approach to deal with constraints. 

In the functional world, ML has been extended to support binders using an approach similar to $\lambda$-tree syntax \cite{tr-ml-binder}. A type \ensuremath{\Varid{a}\Rightarrow \Varid{b}} represents values of type \ensuremath{\Varid{b}} where a variable of type \ensuremath{\Varid{a}} is bound. Matching on a value of this type binds to a function.

\paragraph{Nominal abstract syntax.} Another different approach to introduce binders in the syntax of a language is given by \emph{nominal syntax}. The main difference is that correctness is based on the notion of \emph{variable swapping}. Using this idea, it can be proven that programs operate correctly under $\alpha$-equivalence. See \cite{TIU20073} for a detailed account of the differences.

 $\alpha$Prolog extends Horn clauses with nominal binding \cite{Cheney:2008:NLP:1387673.1387675}. FreshML \cite{Shinwell:2003:FPB:944705.944729} implements these ideas in the functional world.

\section{Conclusion and Future Work}

\chrnabla{} provides an extension to Constraint Handling Rules in which binding manipulation is reflected in the syntax. This provides a sound basis for developing compilers and type checkers for advanced type programming languages.

In the future, we aim to look at other extensions of CHRs and their integration with binding and $\nabla$ quantification. In particular CHR${}^\vee$ \cite{Abdennadher:1998:CFQ:645423.756455}, which features disjunction, enables us to escape from the committed choice semantics of CHR and explore several branches of computation. Such an ability is needed for type checking some forms of overloading, like the one present in Swift \cite{swift}.

%\section*{Acknowledgments}
%This work was supported by the Netherlands Organisation for Scientific Research (NWO) project on ``DOMain Specific Type Error Diagnosis (DOMSTED)'' (612.001.213).

\bibliographystyle{acmtrans}
\bibliography{full}

\newpage
\appendix

\section{Proof of the Confluence Theorem}
\label{sec:proof}

\thmmain*
\begin{proof}
The proof is similar to the one in \cite{Abdennadher1997}, we need to consider each possible pair of execution steps in the theoretical operational semantics $\omega_t$.

\paragraph{Solve + Solve.} In this case we start with an execution state $\langle \{ c_1,c_2 \} \uplus G, S, B, T, \mathcal{N} \rangle$ for which two different built-ins are moved into the third component:
$$\langle \{ c_2 \} \uplus G, S, \{ c_1 \} \wedge B, T, \mathcal{N} \rangle \; \text{ and } \; \langle \{ c_1 \} \uplus G, S, c_2 \wedge B, T, \mathcal{N} \rangle$$
Now we can choose to move the other built-in to restore variance (actually, equality) of the execution states.

\paragraph{Introduce + Introduce, Solve + Introduce.} Similar to the Solve + Solve case, since the variance of the execution states is independent of the other in which we perform those steps. As in the previous case, we have equality of the joined states.

\paragraph{Solve + Apply, Introduce + Apply.} By inspection of the rules for the semantics, we see that the constraints affected by Solve and Introduce come from the $G$ component of the execution state, whereas those affected by Apply come from $S$. This means that it is not possible for the execution steps to affect the same constraint. As a result, if we take a step in one direction we can always apply the other one irrespectively of the former, leading to variant states.

\paragraph{Apply + Apply.} This is the interesting case. Suppose that the following rule instances have been applied:
$$H^k_1 \mathrel{\backslash} H^r_1 \iff g_1 \mathrel{|} \nabla \overline{X}_1. \, \exists \overline{Y}_1. \, B_1 \; \text{ and } \; H^k_2 \mathrel{\backslash} H^r_2 \iff g_2 \mathrel{|} \nabla \overline{X}_2. \, \exists \overline{Y}_2. \, B_2$$
Following \cite{Abdennadher1997} we further distinguish between two scenarios.

\begin{itemize}
\item No constraint in the first rule unified with another in the second rule (``disjoint peaks''). In this case the application of one rule does not interfere with the application of the other. Thus, given the states $S_1$ and $S_2$ resulting from applying the first and second rule, respectively, we can join them by applying the second rule to $S_1$ and the first rule to $S_2$.

The only problem is finding the permutation between the nominal constants introduced by each rule. Take $\mathcal{N}_1$ to be the set of constants introduced by the first rule in the first state, $\mathcal{N}_2$ those introduced then by applying the second rule, and the same for $\mathcal{M}_1$ and $\mathcal{M}_2$ with respect to $S_2$. We just need to permute $\mathcal{N}_1$ to $\mathcal{M}_1$ and $\mathcal{N}_2$ to $\mathcal{M}_2$ to satisfy the variance requirements.

\item At least one constraint in the first rule unifies with another in the second rule (``critical peaks''). In this case the condition on joinability of critical pairs is enough to guarantee joinability of the corresponding execution states: all the other constraints not mentioned in the critical pair remain untouched by the execution of the rules.
\end{itemize}
\end{proof}

\hide{
\newpage

\section{Main Changes for Second Round Review}

\begin{itemize}
\item In the title we changed ``$\nabla$ Quantification'' to ``Generic Quantification'' -- this is the name used in \cite{TIU20073} -- to address the comment of Reviewer \#2.
\item \autoref{sec:nabla} has been expanded to include a description of all the rules in the theoretical operational semantics, instead of just referring to \cite{Duck2004} and naming the changes, adressing the comments of Reviewer \#2.
\item We make explicit what patterns stand for in \autoref{sec:nabla} -- just another name for constraints.
\item The proof of confluence in Appendix A is now complete, instead of just referring to \cite{Abdennadher1997}, as requested by Reviewer \#1.
\item A more detailed account of termination is given in \autoref{sec:confluence}.
\item Following Reviewer \#3's advice, \autoref{sec:impl} includes now more details about our implementation of higher-rank polymorphism using \chrnabla. In particular, we mention that the performance is acceptable and could be improved with the usual means -- on-the-fly substitution, indexing constraints -- and we conjecture that the algorithm is complete for predicative instantiation.
\end{itemize}
}

\label{lastpage}

\end{document}